\documentclass{article}
\usepackage[english]{babel}
\usepackage[utf8x]{inputenc}

\usepackage[a4paper,top=2cm,bottom=2cm,left=3cm,right=3cm,marginparwidth=1.75cm]{geometry}
\usepackage{amsmath,amssymb,amsthm}
\usepackage{graphicx}
\usepackage{float}
\usepackage{bm}
\usepackage{tikz}
\usepackage[numbers,sort]{natbib}
\usepackage{algorithm,algorithmic,float}
\usepackage{enumitem}
\usepackage{comment}
\usepackage{mathtools}
\usepackage{url}
\usepackage{xcolor}
\usepackage{hyperref}
\hypersetup{
    colorlinks,
    linkcolor={blue!90!black},
    citecolor={blue!90!black},
    urlcolor={blue!90!black}
}
\usepackage{cleveref}
\usepackage{times}

\usepackage{makecell}
\usepackage{multirow}
\usepackage{nicematrix}
\usepackage{xfrac}

\usepackage{pmboxdraw}

\makeatletter
\renewcommand{\algorithmiccomment}[1]{\hfill{\color{blue!90!black}//~#1}}
\makeatother

\makeatletter
\newcommand*\rel@kern[1]{\kern#1\dimexpr\macc@kerna}
\newcommand*\widebar[1]{%
  \begingroup
  \def\mathaccent##1##2{%
    \rel@kern{0.8}%
    \overline{\rel@kern{-0.8}\macc@nucleus\rel@kern{0.2}}%
    \rel@kern{-0.2}%
  }%
  \macc@depth\@ne
  \let\math@bgroup\@empty \let\math@egroup\macc@set@skewchar
  \mathsurround\z@ \frozen@everymath{\mathgroup\macc@group\relax}%
  \macc@set@skewchar\relax
  \let\mathaccentV\macc@nested@a
  \macc@nested@a\relax111{#1}%
  \endgroup
}
\makeatother

\newtheorem{lemma}{Lemma}

\newtheorem{proposition}{Proposition}
\newtheorem{corollary}{Corollary}

\usepackage{braille}
\theoremstyle{definition}
\newtheorem{definition}{Definition}
\newtheorem{example}{Example}
\newtheorem{remark}{Remark}
\newtheorem{notation}{Notation}

\newcommand{\grobner}{Gr{\"o}bner}

\newcommand{\lm}{\mathrm{lm}}
\newcommand{\Mac}{\mathrm{Mac}}

\usepackage[dvipsnames]{xcolor}
\usepackage{booktabs}

\usepackage{tikz}
\usetikzlibrary{cd,matrix,shapes,decorations.pathreplacing,backgrounds,positioning,calc,shapes.misc,mindmap, arrows.meta,decorations,calligraphy}
\tikzset{
    lablvert/.style={anchor=south, rotate=90, inner sep=.5em}
}

\usepackage{subcaption}

\definecolor{vlgrey}{HTML}{707070}

\usepackage[textsize=tiny,colorinlistoftodos]{todonotes}

\newcommand{\Sasha}[2][]{}

\newcommand{\cU}{
\mathcal{U}
}

\newcommand{\rank}{
\operatorname{rank}
}

\title{Probably correct row echelon form in the F4 algorithm\thanks{This work has been supported by an ERC-2023-ADG grant for the ODELIX project (number 101142171) and partially supported by the NSF grant CCF-2212460.}}
\author{Alexander Demin\thanks{Laboratoire d'informatique de l'École polytechnique (LIX, UMR 7161), CNRS, École polytechnique, Institut Polytechnique de Paris, Palaiseau, France (email: \href{mailto:demin@lix.polytechnique.fr}{demin@lix.polytechnique.fr})}}
\date{\today}

\begin{document}

\maketitle

\makeatletter
\begingroup
\renewcommand{\thefootnote}{}
\renewcommand{\@makefntext}[1]{\noindent #1}

\footnotetext{%
\vspace*{0.6em}%
\begin{minipage}[t]{0.85\linewidth}
\vspace{0pt}%
\itshape
Funded by the European Union. Views and opinions expressed are however
those of the author(s) only and do not necessarily reflect those of the
European Union or the European Research Council Executive Agency.
Neither the European Union nor the granting authority can be held
responsible for them.
\end{minipage}%
\hfill
\begin{minipage}[t]{0.13\linewidth}
\vspace{0.20em}% move logo slightly down
% \centering
\includegraphics[height=0.80cm]{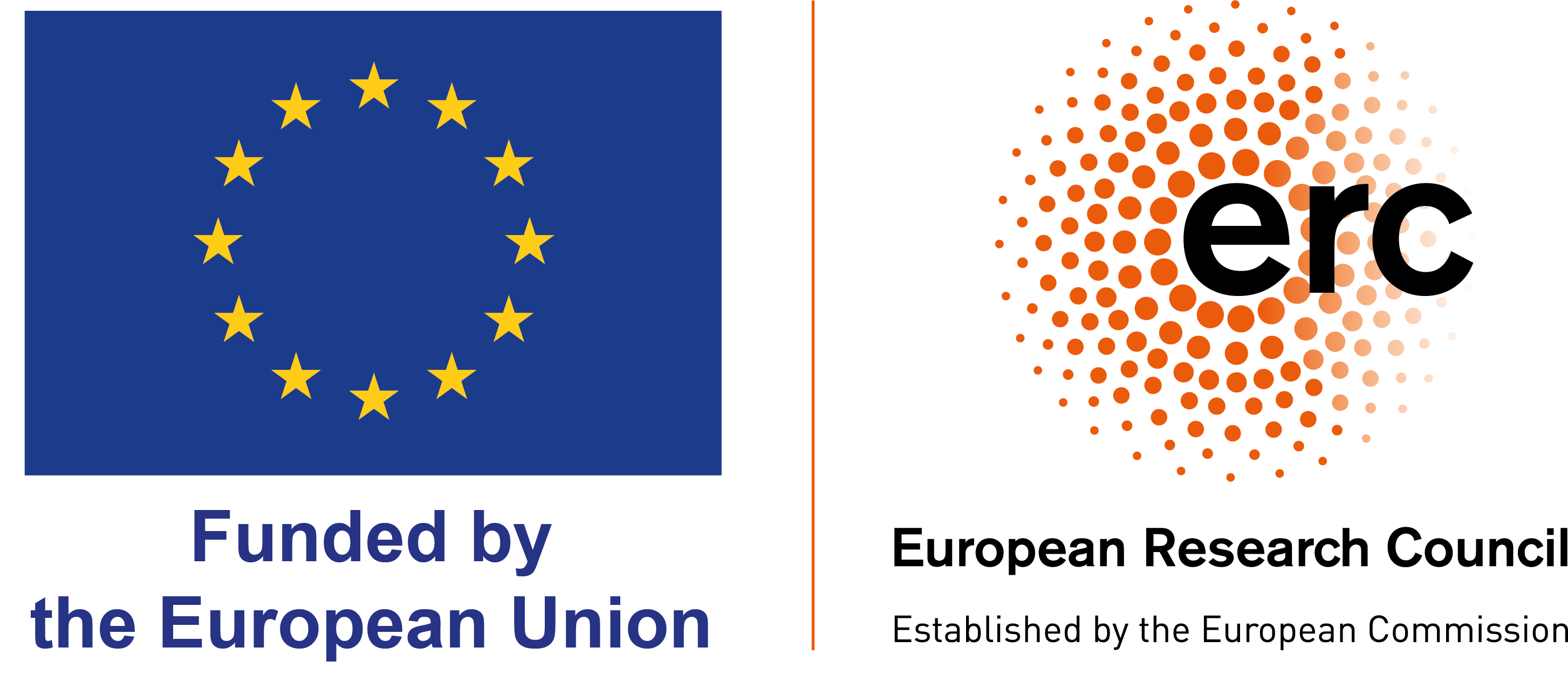}
\end{minipage}%
}
\endgroup
\makeatother

\begin{abstract}
The computation of row echelon form is one of the main bottlenecks in the F4 algorithm.
Several state of the art implementations use a probabilistic algorithm attributed to Monagan, Pearce, and Steel to accelerate this computation. Despite this, no bound on the probability that the algorithm returns an incorrect result appears to be available. In this paper, we provide such a bound. Furthermore, building on this result, we propose a Las-Vegas variant of the F4 algorithm and show experimentally that it can outperform deterministic F4 on some classical examples.
\end{abstract}

\section{Introduction}

The F4 algorithm~\cite{F4} is a widely adopted algorithm for computing \grobner{} bases. It is implemented in many computer algebra systems, such as Giac/Xcas~\cite{giac}, Macaulay2~\cite{M2}, Magma~\cite{magma}, Maple~\cite{maple}, to name a few. The efficiency in the F4 algorithm depends, in particular, on the efficient computation of a row echelon form of a matrix over a finite field.

Efficient linear algebra in the F4 algorithm has been studied before~\cite{gbla,fgla,compact}. 
In this line of work, there exists a probabilistic method that exploits the low-rank structure of the matrices that typically appear in the F4 algorithm. The method has been attributed to Monagan, Pearce, and Steel; we quote~\cite{splitting}:

   {\it ``The probabilistic strategy was first developed by Allan Steel (private communication), and we figured it out independently from studying Magma timings.''}

The method by Monagan, Pearce, and Steel has been used in state of the art implementations, including axcas~\cite{axcas}, Groebner.jl~\cite{demin2024groebnerjl}, and msolve~\cite{msolve}. The algorithm is probabilistic in the Monte-Carlo sense, which means the result could be incorrect. However, to the best of our knowledge, no bound on the probability of an incorrect result is available.
% probabilistic analysis of the algorithm has not been carried out yet.

In this article, 
\begin{itemize}
    \item We provide a practical upper bound on the probability that the algorithm by Monagan, Pearce, and Steel returns an incorrect result (\Cref{sec:ref}). 

    \item We propose a Las-Vegas variant of the F4 algorithm based on the probabilistic algorithm by Monagan, Pearce, and Steel (\Cref{sec:f4}). The construction uses the fact that the output of the probabilistic algorithm is contained in the rowspace of the input matrix, even when the result is incorrect. We also show how, in practice, the probability bounds can be sharpened by combining our analysis with the block structure of Macaulay matrices described by Faug{\`e}re and Lachartre~\cite{fgla}.
\end{itemize}

In practice, these results would allow users to prescribe a lower bound on the probability of correctness and, when needed, obtain a certified result using the Las-Vegas algorithm. Thus, several modern implementations of the F4 algorithm may benefit from these guarantees. Our experiments show that the Las-Vegas approach can provide a speedup over the deterministic F4 algorithm in examples where the cost of certification is small compared with the gain from probabilistic linear algebra.

 % Existing works develop practically efficient algorithms by specializing on the structure and sparsity of the matrices in the F4 algorithm. 
% Existing works develop practically efficient algorithms by specializing on the structure and sparsity of the matrices in the F4 algorithm. 
% They employ dedicated matrix representations, row- and column- reordering strategies, fast modular arithmetic, parallelism, etc.

% Then, we propose variants of the F4 algorithm that use probabilistic row echelon form. Thus, we transform the hitherto heuristic methods into algorithms that provide guarantees on correctness, and state of the art implementations may benefit from this.
% Additionally, we show how to obtain better bounds by combining our result with the work of Faug{\`e}re and Lachartre \cite{fgla} on block structure of Macaulay matrices.
% \Sasha{Our contribution is two-fold: estimate probability and propose new ways to use in the F4 algorithm. We use two ingredients: randomized method and F4 matrix structure approach from Faugere.}
\Sasha{cite Emiris and Pan rank probability -- ours is better?}

\Sasha{We want the algorithm to be practical, and not really to have good complexity. ~\cite{Cheung2013,Storjohann2014}}

% This paper is organized as follows. In \Cref{sec:ref}, we introduce the probabilistic algorithm for computing row echelon form over finite fields. Then, to maintain an appropriate level of detail, in \Cref{sec:f4-basic}, we recall a basic version of the F4 algorithm, focusing on the part most relevant to this paper: matrix construction and matrix reduction. We discuss block structure of Macaulay matrices in \Cref{sec:mac}. Finally, in \Cref{sec:f4-proba}, we apply probabilistic row echelon form in the F4 algorithm.

% We start by introducing terminology in \Cref{sec:notation}. Then, as an appetizer, in \Cref{sec:test-gb,sec:test-gb-rand}, we discuss the use of probabilistic row echelon form for checking if a set of polynomials is a \grobner{} basis.
% This method shares details with the F4 algorithm but is simpler, which makes it a nice introduction. Then, to maintain an appropriate level of detail, in \Cref{sec:f4}, we recall a basic version of the F4 algorithm, focusing on the part most relevant to this paper: matrix construction and matrix reduction. Finally, \Cref{sec:f4-mc,sec:f4-lv,sec:f4-ac} discuss the application of probabilistic linear algebra in the F4 algorithm.

\paragraph{Acknowledgement.}
We are grateful to Matías Bender, Jean-Guillaume Dumas, Roman Pearce, and Gleb Pogudin for discussions, comments, and ideas.
We thank Fabrice Rouillier for suggesting \Cref{lemma:inclusion} to us.

\section{Probabilistic row echelon form}
\label{sec:ref}

\subsection{Notation}

Let $k$ denote a finite field with $q$ elements.
Let $\cU(k^{m\times n})$ denote the uniform probability distribution on $k^{m\times n}$, that is, the distribution of $m\times n$ matrices with
entries chosen independently and uniformly from $k$.

\subsection{Rank of a random matrix}

\begin{lemma}[{\cite[Lemma 1]{salmond2016}, \cite{orig}}]
\label{lemma:prob-random-square-not-singular}
Let $r \leqslant N$. Let $B \sim \cU(k^{r \times N})$. The probability that $\rank(B) = r$ is
\[
\prod_{i=N-r+1}^{N} \left(1-\frac{1}{q^i}\right).
\]
\end{lemma}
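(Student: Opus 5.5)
We count matrices row by row, exposing the rows of $B$ one at a time. Since the entries of $B$ are independent and uniform on $k$, the rows $b_1,\dots,b_r \in k^{N}$ are independent and uniformly distributed on $k^N$. The event $\rank(B)=r$ is the event that $b_1,\dots,b_r$ are linearly independent. This is the intersection of the events $E_j$ that $b_j \notin \operatorname{span}(b_1,\dots,b_{j-1})$, for $j=1,\dots,r$.

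The key step is to compute the conditional probabilities by the chain rule:
\[
\Pr[\rank(B)=r] = \prod_{j=1}^{r} \Pr\bigl[E_j \mid E_1\cap\dots\cap E_{j-1}\bigr].
\]
Fix any values of $b_1,\dots,b_{j-1}$ for which $E_1,\dots,E_{j-1}$ hold. Then the span $V_{j-1}$ of these rows is a subspace of dimension exactly $j-1$, so it has $q^{j-1}$ elements. Because $b_j$ is independent of the earlier rows and uniform on $k^N$, which has $q^N$ elements,
\[
\Pr\bigl[b_j\in V_{j-1}\mid b_1,\dots,b_{j-1}\bigr] = \frac{q^{j-1}}{q^N}.
\]
This value does not depend on the particular earlier rows. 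Averaging over them therefore gives
\[
\Pr\bigl[E_j \mid E_1\cap\dots\cap E_{j-1}\bigr] = 1 - q^{-(N-j+1)}.
\]

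Taking the product over $j=1,\dots,r$ gives $\prod_{j=1}^r \bigl(1-q^{-(N-j+1)}\bigr)$. Substituting $i = N-j+1$, the index $i$ runs from $N-r+1$ (at $j=r$) to $N$ (at $j=1$). This is exactly the product in the statement. The hypothesis $r\leqslant N$ is what makes every exponent $i\ge 1$; for $r>N$ the factor at $j=N+1$ would vanish, consistent with full row rank being impossible.

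There is no real obstacle in this argument. The only point needing care is the conditioning step: we must justify that the conditional probability is the same for every admissible prefix of rows. This follows from independence of the rows together with the fact that $\dim V_{j-1}=j-1$ holds exactly on the conditioning event.
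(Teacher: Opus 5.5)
Your proposal is correct and follows essentially the same route as the paper's proof. Both expose the rows one at a time, apply the chain rule over the events that the first $j$ rows are independent, and compute each conditional factor as $(q^N-q^{j-1})/q^N$, using that the span of $j-1$ independent rows has exactly $q^{j-1}$ elements.
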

\begin{proof}
    For every $i=0,\ldots,r$, let $E_i$ denote the event that the first $i$ rows of $B$ are linearly independent. In particular, $P(E_0)=1$. It suffices to calculate $P(E_r)$. 
    Note
    \[
    \begin{aligned}
    P(E_r) &= P(E_{r} | E_{r-1}) P(E_{r-1}) + P(E_{r} | \widebar{E_{r-1}}) P(\widebar{E_{r-1}})\\
    &= P(E_{r} | E_{r-1}) P(E_{r-1})\\
    &= P(E_{r} | E_{r-1}) \cdots P(E_{1} | E_{0}) P(E_{0})\\
    &= P(E_{r} | E_{r-1}) \cdots P(E_{1} | E_{0}).
    \end{aligned}
    \]
    Therefore, it suffices to calculate $P(E_i | E_{i-1})$, where $i \in \{1,\ldots,r\}$. If $E_{i-1}$ holds, then the first $i-1$ rows of $B$ span a vector space of dimension $i-1$ with $q^{i-1}$ elements. Hence the $i$-th row of $B$, which is picked uniformly from $k^N$, is outside this span with probability
    \[
    P(E_i | E_{i-1}) = \frac{q^N - q^{i-1}}{q^N} = 1 - \frac{1}{q^{N - i + 1}}.
    \]
    We take the product over $i=1,\ldots,r$, and the claim follows.
\end{proof}

\begin{corollary}
\label{lemma:prob-random-product-rpsn-nr}
Let $r \leqslant N$ and $s \geqslant 1$. Let $B \sim \cU(k^{(r+s-1) \times N})$. 
Let $A \in k^{N\times r}$ be arbitrary but fixed with $\rank(A) = r$.
The probability that $\rank(B A) = r$ is
\[
\prod_{i=s}^{r+s-1} \left(1 - \frac{1}{q^i}\right).
\]
\end{corollary}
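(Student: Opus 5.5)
The plan is to reduce the statement to \Cref{lemma:prob-random-square-not-singular} by showing that $BA$ is itself uniformly distributed. Since $A \in k^{N\times r}$ has rank $r$, the linear map $k^{1\times N} \to k^{1\times r}$, $b \mapsto bA$, is surjective. Each fibre is a coset of the left kernel of $A$, which has dimension $N-r$. So every fibre has exactly $q^{N-r}$ elements.

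\emph{First step.} If $b$ is uniform on $k^{1\times N}$, then $bA$ is uniform on $k^{1\times r}$: for each $c\in k^{1\times r}$,
\[
P(bA=c) = \frac{q^{N-r}}{q^N} = \frac{1}{q^r}.
\]

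\emph{Second step.} The rows of $B$ are independent and uniform, and the $j$-th row of $BA$ equals (the $j$-th row of $B$)$\cdot A$. Hence the rows of $BA$ are independent and uniform on $k^{1\times r}$. In other words, $BA \sim \cU(k^{(r+s-1)\times r})$.

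\emph{Third step.} It remains to compute the probability that a uniform $(r+s-1)\times r$ matrix has rank $r$. Since $\rank(M)=\rank(M^{T})$, this is the probability that $C=(BA)^{T}\sim \cU(k^{r\times (r+s-1)})$ has full row rank $r$. The transpose of a uniform matrix is uniform, and $r \le r+s-1$ because $s\ge 1$. So \Cref{lemma:prob-random-square-not-singular} applies with $N$ replaced by $r+s-1$. It gives
\[
\prod_{i=(r+s-1)-r+1}^{r+s-1}\left(1-\frac{1}{q^i}\right)=\prod_{i=s}^{r+s-1}\left(1-\frac{1}{q^i}\right),
\]
which is the claim.

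\emph{Main point of care.} The only step needing attention is the uniformity of $BA$, specifically that the fibres of $b\mapsto bA$ all have equal size. This follows from $\rank(A)=r$ via rank--nullity. The remaining steps are bookkeeping: passing to the transpose and reindexing the product.
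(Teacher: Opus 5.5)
Your proof is correct and matches the paper's argument step for step. Both show that $b\mapsto bA$ has fibres of equal size $q^{N-r}$, so $BA\sim\cU(k^{(r+s-1)\times r})$, and then apply \Cref{lemma:prob-random-square-not-singular} to $(BA)^T$. You additionally write out the reindexing $N\mapsto r+s-1$, which the paper leaves implicit.
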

\begin{proof}
Since $\rank(A) = r$, the map $b\mapsto bA$ is surjective, and its kernel is a linear subspace of dimension $N-r$. Therefore, for every $y\in k^r$, exactly $q^{N-r}$ elements of $k^N$ map to $y$.
Hence, if $b$ is uniformly distributed in $k^N$, then, for every $y\in k^r$,
\[
P(b A = y) = \frac{q^{N-r}}{q^N}=\frac{1}{q^r}.
\]
Thus, if $b$ is uniformly distributed in $k^N$, then $bA$ is uniformly distributed in $k^r$.
Since the rows of $B$ are independent, we have
\[
BA\sim\cU(k^{(r+s-1)\times r}).
\]
We apply \Cref{lemma:prob-random-square-not-singular} to $(BA)^T$ and the claim follows.
\end{proof}

\begin{corollary}
\label{lemma:prob-random-product-rpsn-nm}
Let $r \leqslant N$ and $s \geqslant 1$. Let $B \sim \cU(k^{(r+s-1) \times N})$. 
Let $A \in k^{N\times M}$ be arbitrary but fixed with $\rank(A) = r$. 
The probability that $\rank(B A) = r$ is
\[
\prod_{i=s}^{r+s-1} \left(1 - \frac{1}{q^i}\right).
\]
\end{corollary}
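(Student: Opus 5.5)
The plan is to reduce to the previous corollary (\Cref{lemma:prob-random-product-rpsn-nr}) by factoring $A$ through a full column rank matrix. Since $\rank(A)=r$, we can write $A = A' C$ with $A'\in k^{N\times r}$ of rank $r$ and $C\in k^{r\times M}$ of rank $r$. One concrete choice is to take $A'$ to be $r$ linearly independent columns of $A$, which form a basis of the column space. Then $C$ is the matrix expressing every column of $A$ in this basis. Because $C$ contains the identity among its columns, it has full row rank $r$. The factorization is fixed and independent of $B$, since it depends only on $A$.

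Next, I will show that $\rank(BA) = \rank(BA')$ for every $B$. Write $BA = (BA')C$. Because $C$ has full row rank $r$, it admits a right inverse $C^+\in k^{M\times r}$ with $CC^+ = I_r$. This gives $\rank(BA)\leqslant \rank(BA')$ from the product, and also $BA' = (BA)C^+$, so $\rank(BA')\leqslant\rank(BA)$. Hence the events $\{\rank(BA)=r\}$ and $\{\rank(BA')=r\}$ coincide.

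Finally, I apply \Cref{lemma:prob-random-product-rpsn-nr} to $A'$, which is fixed, has rank $r$, and has size $N\times r$. The probability that $\rank(BA')=r$ equals $\prod_{i=s}^{r+s-1}(1-q^{-i})$, and this is the claimed value.

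There is no real obstacle here. The only point requiring care is that the factorization must not depend on $B$, so that the corollary, which is stated for a fixed matrix, applies verbatim. Selecting a deterministic basis of the column space, for example the first $r$ pivot columns, ensures this.
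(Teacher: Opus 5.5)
Your proposal is correct and follows essentially the same route as the paper. Both use a rank factorization $A = GH$, the fact that $\rank(BGH) = \rank(BG)$ because $H$ has full row rank, and then apply \Cref{lemma:prob-random-product-rpsn-nr} to $B$ and $G$; your right-inverse argument spells out the rank-preservation step that the paper states without proof.
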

\begin{proof}
Let $A = GH$ be a rank factorization of $A$, that is, $G \in k^{N\times r}, H \in k^{r \times M}$ with $\rank(G) = \rank(H) = r$. Then,
\[
BA = BGH.
\]
Note $\rank(BGH) = \rank(BG)$, since right multiplication by a matrix with linearly independent rows does not change the rank. Therefore, it suffices to estimate the probability that $\rank(BG) = r$. We apply~\Cref{lemma:prob-random-product-rpsn-nr} to $B, G$ and the claim follows.
\end{proof}

Let $r \leqslant N$ and $s \geqslant 1$. Let $B \sim \cU(k^{(r+s-1) \times N})$. 
Let
\[
\label{eq:concat}
A \coloneqq \begin{bmatrix}
    A_1 & A_2\\
    A_3 & A_4
\end{bmatrix} \in k^{(R+N)\times (R+M)}
\quad
\text{and}
\quad
A^{\ast} \coloneqq \begin{bmatrix}
    A_1 & A_2\\
    B A_3 & B A_4
\end{bmatrix} \in k^{(R+r+s-1)\times (R+M)},
\]
where $A_1 \in k^{R\times R}, A_2 \in k^{R\times M}, A_3 \in k^{N\times R}, A_4 \in k^{N\times M}$, with $\rank(A_1) = R$ and $\rank(A) = R + r$.

\begin{corollary}\label{lemma:schur}
The probability that $\rank(A) = \rank(A^\ast)$ is
\[
\prod_{i=s}^{r+s-1} \left(1 - \frac{1}{q^i}\right).
\]
\end{corollary}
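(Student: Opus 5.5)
The plan is to reduce to \Cref{lemma:prob-random-product-rpsn-nm} via a Schur complement. Since $A_1$ is invertible, I eliminate the bottom-left blocks with row operations. Set $S \coloneqq A_4 - A_3 A_1^{-1} A_2 \in k^{N\times M}$. Left-multiplying $A$ by the invertible block matrix
\[
\begin{bmatrix} I_R & 0\\ -A_3A_1^{-1} & I_N\end{bmatrix}
\]
transforms $A$ into
\[
\begin{bmatrix} A_1 & A_2\\ 0 & S\end{bmatrix}.
\]
Since $A_1$ is invertible, the rank of this matrix is $R+\rank(S)$. The hypothesis $\rank(A)=R+r$ therefore gives $\rank(S)=r$.

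Next I apply the analogous elimination to $A^\ast$, left-multiplying by
\[
\begin{bmatrix} I_R & 0\\ -BA_3A_1^{-1} & I_{r+s-1}\end{bmatrix}.
\]
The bottom blocks become $BA_3 - BA_3A_1^{-1}A_1 = 0$ and $BA_4 - BA_3A_1^{-1}A_2 = BS$. Hence $\rank(A^\ast) = R + \rank(BS)$, and the event $\rank(A^\ast)=\rank(A)$ coincides exactly with the event $\rank(BS)=r$.

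The final step applies \Cref{lemma:prob-random-product-rpsn-nm} to $B\sim\cU(k^{(r+s-1)\times N})$ and to the fixed matrix $S\in k^{N\times M}$ of rank $r$. This gives the claimed probability $\prod_{i=s}^{r+s-1}(1-q^{-i})$. The hypothesis $r\leqslant N$ needed there is given, and it is in any case automatic from $\rank(S)=r$.

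I expect no real obstacle. The only point needing care is the bookkeeping that the same block elimination works for $A^\ast$: the multiplier $BA_3A_1^{-1}$ depends on $B$, so it is random. This is harmless because we condition on $B$ and the rank identity holds for each fixed $B$. Meanwhile $S$ depends only on $A$, so it is fixed, as \Cref{lemma:prob-random-product-rpsn-nm} requires.
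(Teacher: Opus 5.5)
Your proposal is correct and matches the paper's proof: you use the same Schur complement $A_4 - A_3A_1^{-1}A_2$ to get $\rank(A^\ast) = R + \rank(BS)$ and $\rank(S) = r$, and you finish with \Cref{lemma:prob-random-product-rpsn-nm}. Your remark that the rank identity holds for every fixed $B$, so the two events coincide, is a step the paper leaves implicit.
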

\begin{proof}
Let $C \coloneqq A_4 - A_3 A_1^{-1}A_2 \in k^{N \times M}$. Consider the Schur complement of $A^\ast$:
\[
A^\ast = 
\begin{bmatrix}
    I_{R\times R} & 0\\
    B A_3 A_1^{-1} & I_{(r+s-1) \times (r+s-1)}
\end{bmatrix}
\begin{bmatrix}
    A_1 & A_2\\
    0 & B C
\end{bmatrix}.
\]
Note $\rank(A^\ast) = \rank(A_1) + \rank(B C)$. Therefore, it suffices to estimate the probability that
\[
\rank(B C) = r.
\]
Note $\rank(A) = \rank(A_1) + \rank(C)$. This can be seen by considering the Schur complement of $A$. Therefore, $\rank(C) = r$. We apply~\Cref{lemma:prob-random-product-rpsn-nm} to $B, C$ and the claim follows.
\end{proof}

\subsection{Probabilistic row echelon form}
\label{sec:prob-echelon}

Let $A \in k^{(R+N)\times (R + M)}$ be of the form
\begin{equation}
    \label{eq:concat-2}
A := \begin{bmatrix}
    A_1\\
    A_2
\end{bmatrix},
\end{equation}
where $A_1 \in k^{R\times (R+M)}, A_2 \in k^{N\times (R+M)}$, such that $A_1$ is in row echelon form with $\rank(A_1) = R$, and $\rank(A) = R + r$. 

Our goal is to compute a row echelon form of this matrix.
We first give an informal description of the algorithm following~\cite[Section 4.2]{msolve}. Let $L := A_1$, which is already in row echelon form. Partition the rows of $A_2$ into blocks of a given size, say $\ell$ consecutive rows form a block. More precisely, let $T := \lceil N/\ell \rceil$ and
\[
\begin{array}{cc}
A_2 := \begin{bmatrix}
    A_{2,1}\\
    \vdots\\
    A_{2,T}
\end{bmatrix},&
\quad\begin{aligned}
&A_{2,1},\ldots,A_{2,T-1} \in k^{\ell\times (R+M)},\\ &A_{2,T} \in k^{\min(\ell, N - (T-1) \ell)\times (R+M)}. 
\end{aligned}
\end{array}
\]
We take a random linear combination of the rows in a block and reduce it with respect to $L$. If the outcome is non-zero, we have found a new pivot row and add it to $L$. We then take another random linear combination and repeat the procedure. We stop with the current block once we have either reduced $\ell$ linear combinations (or, perhaps, fewer for $A_{2,T}$) or once a reduction to zero happens. The probability of getting zero by chance is small, and can be decreased by doing more than one reduction to zero before stopping with the block.

\begin{algorithm}[H]
\caption{Probabilistic row echelon form (Monagan, Pearce, and Steel, cf.~\cite{splitting})}
\label{alg:randomized-rowspace-first}
\vspace{-0.4em}
\begin{description}[itemsep=0pt]
\item[Input:] $A \in k^{(R+N) \times (R+M)}$ as in~\eqref{eq:concat-2}; an integer $1 \leqslant \ell \leqslant N$, the size of the block; an integer $s \geqslant 1$, the number of reductions to zero before stopping with the current block.
\item[Output:] a row echelon form of $A$.
\end{description}

\begin{enumerate}[label = \textbf{(Step~\arabic*)}, leftmargin=*, align=left, labelsep=4pt, itemsep=2pt, topsep=2pt]
    \item\label{step:first} Set $L \coloneqq A_1$ and $T \coloneqq \lceil N/\ell \rceil$.
    \item\label{step:randomized-rank:loop-3} For $t=1,\ldots,T$ do
    \begin{enumerate}[labelsep=4pt, itemsep=4pt,topsep=4pt, label = (\alph*), ref = \theenumi(\alph*)]
    % \item Set $I_t \coloneqq \{ 1+(t-1)T,\ldots,\min(t T, N) \}$.
    \item Set $z := 0$ and $\ell_t := \min(\ell, N - (t-1)\ell)$.
    \item\label{step:randomized-rank:loop-2} For $i=1,\ldots,\ell_t+s-1$ do
    \begin{enumerate}[labelsep=4pt, itemsep=4pt, topsep=4pt, label = \roman*., ref = \theenumii\roman*]
        \item \label{step:random-vec} Let $b_i \in k^{\ell_t}$ be chosen uniformly.
        \item Compute $c \coloneqq b_i^T A_{2,t}$. \algorithmiccomment{{A random linear combination of rows in the $t$-th block.}}
        \item Compute $c^\ast \in k^{R+M}$, the reduction of $c$ with respect to the row echelon form $L$.
        \item If \(c^*=0\), then set $z\coloneqq z+1$.
        If $z=s$, then \textbf{break} out of the loop of \ref{step:randomized-rank:loop-2}.
        \item If $c^*\neq0$, then append $c^\ast$ to the rows of $L$, maintaining row echelon form.
    \end{enumerate}
    \end{enumerate}
    \item {\bf Return} $L$.
\end{enumerate}
\end{algorithm}

\begin{remark}
\label{remark:r-is-zero}
    In the case $R = 0$, \Cref{alg:randomized-rowspace-first} yields a probabilistic algorithm for computing a row echelon form of an arbitrary matrix $A \in k^{N\times M}$.
\end{remark}

\begin{remark}[Variants of the algorithm]
\Cref{alg:randomized-rowspace-first} stops processing a block after $s$ reductions to zero in total. The strategy described in~\cite{splitting}
instead requires $s$ consecutive reductions to zero. The two methods coincide when $s=1$. We leave the
probabilistic analysis of the latter variant for future work.
\end{remark}

\begin{proposition}
\label{prop:result}
Let $T \coloneqq \lceil N / \ell \rceil$. \Cref{alg:randomized-rowspace-first} returns a correct result with probability at least
\[
\prod_{i=s}^{\ell+s-1}\left(1-\frac{1}{q^i}\right)^T.
\]
Furthermore, upon termination, we have $\operatorname{rowspace}(L) \subseteq \operatorname{rowspace}(A)$ (even when the result is incorrect).
\end{proposition}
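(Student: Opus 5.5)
The inclusion $\operatorname{rowspace}(L)\subseteq\operatorname{rowspace}(A)$ is a simple invariant, so I would dispose of it first. Initially $L=A_1$. Each appended row $c^\ast$ is obtained from $c=b_i^TA_{2,t}$, which lies in $\operatorname{rowspace}(A_2)$, by subtracting multiples of rows of $L$. By induction on the number of appended rows, every row of $L$ therefore stays in $\operatorname{rowspace}(A)$. This gives the second claim regardless of the random choices. The same argument shows that the result is correct exactly when, at termination, $\rank(L)=R+r$, i.e.\ $\operatorname{rowspace}(L)=\operatorname{rowspace}(A)$: the rows of $L$ are in echelon form with distinct pivots, hence independent, and any echelon basis of the rowspace is a valid row echelon form.

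For the probability bound I would work block by block. Let $L_t$ denote $L$ after processing block $t$, and let $V_t=\operatorname{rowspace}(A_1)+\operatorname{rowspace}(A_{2,1})+\dots+\operatorname{rowspace}(A_{2,t})$. Call block $t$ \emph{good} if $\operatorname{rowspace}(L_t)=\operatorname{rowspace}(L_{t-1})+\operatorname{rowspace}(A_{2,t})$. If every block is good, then by induction $\operatorname{rowspace}(L_T)=V_T=\operatorname{rowspace}(A)$. It therefore suffices to show that, conditionally on any history with $\operatorname{rowspace}(L_{t-1})=V_{t-1}$ (indeed on any history at all), block $t$ is good with probability at least $\prod_{i=s}^{\ell+s-1}(1-q^{-i})$. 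The vectors $b_i$ of different blocks are independent, so multiplying these conditional probabilities gives the bound. Here I use that $\ell_t\le\ell$ and that the product $\prod_{i=s}^{r_t+s-1}(1-q^{-i})$ decreases in $r_t$.

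Within block $t$, let $C_t$ be the matrix whose rows are the reductions of the rows of $A_{2,t}$ modulo $L_{t-1}$, and let $r_t=\rank(C_t)\le\ell_t$. I would couple the algorithm with the "full" experiment: draw all $\ell_t+s-1$ vectors $b_i$ in advance as the rows of $B\sim\cU(k^{(\ell_t+s-1)\times\ell_t})$, and let the algorithm consume them in order. The reduction of $b_i^TA_{2,t}$ modulo the current $L$ equals its reduction modulo $L_{t-1}$ further reduced by the rows already added in this block. Hence $c^\ast=0$ exactly when $b_i^TC_t$ lies in the span of $b_1^TC_t,\dots,b_{i-1}^TC_t$.

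The main obstacle is the early break after $s$ zeros. I would argue: if $\rank(BC_t)=r_t$, then among the first $j$ products $b_1^TC_t,\dots,b_j^TC_t$, the number of zero reductions is $j$ minus the rank of these products. So if fewer than $r_t$ pivots have been found after consuming $j$ vectors, we have at most $j-r_t+\ldots$ zeros. More precisely, the count of zeros reaches $s$ only after the span has rank $r_t$, because the total number of zeros over all $\ell_t+s-1$ draws is $\ell_t+s-1-r_t\ge s-1$ plus the... I would check carefully that $\rank(BC_t)=r_t$ with $r_t\le\ell_t$ forces at most $\ell_t+s-1-r_t$ zeros overall. The delicate point is that when $r_t<\ell_t$ this allows more than $s-1$ zeros before full rank. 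To fix this, I would instead apply \Cref{lemma:prob-random-product-rpsn-nm} with $r=r_t$ to the prefix consisting of the first $r_t+s-1$ rows of $B$. If this prefix gives rank $r_t$, then at most $s-1$ zeros occur before rank $r_t$ is reached, so the break happens only after the span is complete. \Cref{lemma:prob-random-product-rpsn-nm} gives probability exactly $\prod_{i=s}^{r_t+s-1}(1-q^{-i})\ge\prod_{i=s}^{\ell+s-1}(1-q^{-i})$, which completes the block-wise bound.
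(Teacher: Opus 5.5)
Your proposal is correct and follows essentially the same route as the paper: condition block by block, replace the block by its reduction $C_t$ modulo the current pivots (this is the paper's Schur complement step), apply the rank result for $BC_t$ to the first $r_t+s-1$ random vectors, and finish with $r_t\le\ell$. Your prefix argument for the early break (at most $s-1$ zeros before rank $r_t$ is reached) is what the paper's choice of $B_t$ with exactly $r_t+s-1$ rows encodes implicitly. In a final write-up, delete the abandoned intermediate attempt that counts zeros over all $\ell_t+s-1$ draws.
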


\begin{proof}[Proof of~\Cref{prop:result}]
    It suffices to estimate the probability that, upon termination, we have
    \begin{center}
    $\operatorname{rowspace}(L) \subseteq \operatorname{rowspace}(A)$\quad and\quad $\rank(L) = \rank(A)$.    
    \end{center}
    Initially, at \ref{step:first}, we have $\operatorname{rowspace}(L) \subseteq \operatorname{rowspace}(A)$. 
    At every iteration in the loop at~\ref{step:randomized-rank:loop-2}, we have $c^\ast \in \operatorname{rowspace}(A)$, and, hence, upon termination, $\operatorname{rowspace}(L) \subseteq \operatorname{rowspace}(A)$. 

    Therefore, it suffices to estimate the probability that
    \[
    \rank(L) = \rank(A).
    \]
    For every $t=0,\ldots,T$, let $A^{[t]} \in k^{(R+\min(t \ell, N))\times (R+M)}$ denote the matrix that consists of the rows of $A_1$ together with the first $\min(t \ell, N)$ rows of $A_2$. Let $L^{[t]}$ denote the value of $L$ after the $t$-th block has been
    processed, that is, after the $t$-th iteration in \ref{step:randomized-rank:loop-3}. Let $L^{[0]}:=A_1$. 
    Let $E_t$ denote the event that $\rank(L^{[t]}) = \rank(A^{[t]})$. We wish to estimate the probability that $E_T$ holds. Note $E_{0}$ holds.
    Then,
    \[
    \begin{aligned}
    P(E_T) &= P(E_T | E_{T-1})P(E_{T-1}) + P(E_T | \widebar{E_{T-1}})P(\widebar{E_{T-1}})\\
    &\geqslant P(E_T | E_{T-1})P(E_{T-1}) \\
    &\geqslant P(E_T | E_{T-1}) \cdots P(E_1 | E_{0}).
    \end{aligned}
    \]
    Therefore, it suffices for every $t=1,\ldots,T$ to estimate $P(E_t | E_{t-1})$. Let \[r_t \coloneqq \rank(A^{[t]}) - \rank(A^{[t-1]}).\]
    Let $t \in \{1,\ldots,T\}$ be an arbitrary iteration of the loop in \ref{step:randomized-rank:loop-3}.
    Let $B_t \in k^{(r_t+s-1)\times\ell_t}$ be the random matrix with the rows $b_1,\ldots,b_{r_t+s-1}$. 
    Assume $E_{t-1}$ holds. Then $\operatorname{rowspace}(L^{[t-1]})= \operatorname{rowspace}(A^{[t-1]})$, and $E_t$ holds if and only if
    \[
    \rank
    \begin{bmatrix}
    L^{[t-1]}\\
    A_{2,t}
    \end{bmatrix}
    =
    \rank
    \begin{bmatrix}
    L^{[t-1]}\\
    B_tA_{2,t}
    \end{bmatrix}.
    \]
    We apply~\Cref{lemma:schur} to the above equality and obtain
    \[
    P(E_t | E_{t-1}) = \prod_{i=s}^{r_t + s - 1}\left(1-\frac{1}{q^i}\right).
    \]
    Note for every $t=1,\ldots,T$ we have $r_t \leqslant \ell$, and, therefore,
    \[
    P(E_t | E_{t-1}) \geqslant \prod_{i=s}^{\ell + s - 1}\left(1-\frac{1}{q^i}\right).
    \]
    We take the product over $t=1,\ldots,T$, and the claim follows.
\end{proof}

\begin{remark}\label{remark:sharp}
    The bound in \Cref{prop:result} is sharp if and only if the matrix is of full row rank, that is, $r = N$, and $\ell$ divides $N$. Indeed, in the notation of the proof of~\Cref{prop:result}, these conditions are equivalent to the fact that for every $t = 1,\ldots,T$ we have
    $r_t = \ell$
        and
    $P(E_t | \widebar{E_{t - 1}}) = 0$, which make the inequalities sharp.
\end{remark}

\begin{remark}
If the rank $r$ is known, then the choice $\ell \coloneqq N$ combined with a direct application of \Cref{lemma:schur} yields an exact probability instead of a lower bound as in \Cref{prop:result}. This may be useful when the rank is known but a row echelon form is required.
\end{remark}

\subsection{Numerical examples}

In~\Cref{tab:remark-1}, we list upper bounds on the probability of failure in~\Cref{alg:randomized-rowspace-first} provided by~\Cref{prop:result}. The examples come from classical benchmarks for the F4 algorithm~\cite{F4}. For every example, we consider only one matrix, the one with the largest number of rows $N$ (in the F4 algorithm this corresponds to the matrix with the largest number of polynomials reduced at once). The block size is $\ell \coloneqq\lceil\sqrt{N}\rceil$, a heuristic often used in practice. We consider different values of $q$. Probability bounds in the table are rounded up. We observe that incrementing $s$ by one decreases the probability bound roughly by a factor of $q$.

\begin{table}[H]
\renewcommand{\arraystretch}{1.2} % Default value: 1
\centering
\begin{tabular}{l||c|c|r|r||r|r}
Example & $N$ & $\ell$ & $q$ & $1/q$& \thead{\Cref{prop:result}\\with $s\coloneqq1$} & \thead{\Cref{prop:result}\\with $s\coloneqq2$}\\
\hline\hline
\multirow{4}{*}{Cyclic-8} & \multirow{4}{*}{1,216} & \multirow{4}{*}{35} & $2^8 - 5$ & $4 \cdot 10^{-3}$ & $2 \cdot 10^{-1}$ & $6 \cdot 10^{-4}$\\
&&& $2^{16}-15$ & $2 \cdot 10^{-5}$ & $6\cdot10^{-4}$ & $9 \cdot 10^{-9}$\\
&&& $2^{32}-5$ & $3 \cdot 10^{-10}$ & $9\cdot10^{-9}$ & $2 \cdot 10^{-18}$\\
&&& $2^{64}-59$ & $6 \cdot 10^{-20}$ & $2 \cdot 10^{-18}$ & $2 \cdot 10^{-37}$\\
\hline
\multirow{4}{*}{Cyclic-9} & \multirow{4}{*}{5,885} & \multirow{4}{*}{77} & $2^8-5$ & $4\cdot10^{-3}$ & $3 \cdot 10^{-1}$ & $2 \cdot 10^{-3}$\\
&&& $2^{16}-15$ & $2 \cdot 10^{-5}$ & $2\cdot10^{-3}$ & $2 \cdot 10^{-8}$\\
&&& $2^{32}-5$ & $3 \cdot 10^{-10}$ & $2\cdot10^{-8}$ & $5 \cdot 10^{-18}$\\
&&& $2^{64}-59$ & $6 \cdot 10^{-20}$ & $5 \cdot 10^{-18}$ & $3 \cdot 10^{-37}$\\
\end{tabular}
\caption{Upper bounds on the probability of failure in~\Cref{alg:randomized-rowspace-first} for the largest matrix that appears in the F4 algorithm. 
% The fraction $1/q$ is rounded up.
}
\label{tab:remark-1}
\end{table}

\subsection{Choosing parameters in probabilistic row echelon form}

\Cref{prop:result} may be inconvenient to use, because the asymptotic behavior of the bound is not readily apparent.
We provide potentially a more convenient bound based on~\Cref{prop:result}, which could guide the choice of the parameters in the algorithm.
\begin{proposition}
\label{prop:result-3}
\Cref{alg:randomized-rowspace-first} returns an incorrect result with probability at most
\[
\frac{\left\lceil N/\ell \right\rceil}{q^{s-1}(q-1)}.
\]
\end{proposition}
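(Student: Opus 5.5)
The plan is to start from \Cref{prop:result} and convert its product into a sum by a union-bound-type inequality. By \Cref{prop:result}, the failure probability is at most
\[
1-\prod_{i=s}^{\ell+s-1}\left(1-\frac{1}{q^i}\right)^T, \qquad T=\lceil N/\ell\rceil .
\]
I would use the elementary inequality $\prod_j (1-x_j) \geqslant 1-\sum_j x_j$ for $x_j\in[0,1]$. This is proved by a one-line induction: $(1-x)(1-y)\geqslant 1-x-y$ because $xy\geqslant 0$. The product above has $T\ell$ factors, namely $T$ copies of each factor $1-q^{-i}$ for $i=s,\ldots,\ell+s-1$. Applying the inequality to all of them gives
\[
1-\prod_{i=s}^{\ell+s-1}\left(1-\frac{1}{q^i}\right)^T \leqslant T\sum_{i=s}^{\ell+s-1}\frac{1}{q^i}.
\]

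Next I would bound the finite geometric sum by the full geometric tail:
\[
\sum_{i=s}^{\ell+s-1} q^{-i} < \sum_{i=s}^{\infty} q^{-i} = \frac{q^{-s}}{1-q^{-1}} = \frac{1}{q^{s-1}(q-1)}.
\]
Multiplying by $T=\lceil N/\ell\rceil$ gives exactly the stated bound.

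There is no real obstacle here. The only point needing care is to apply the product inequality to all $T\ell$ factors at once; applying it to the inner product and then raising to the power $T$ would require a second use of Bernoulli's inequality. Alternatively, one can argue directly from the proof of \Cref{prop:result}. There, $P(\widebar{E_T})\leqslant \sum_t P(\widebar{E_t}\mid E_{t-1})$, and each term $1-\prod_{i=s}^{r_t+s-1}(1-q^{-i})$ is at most $\sum_{i\geqslant s}q^{-i}$. This route gives the same bound.
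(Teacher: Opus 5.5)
Your proof is correct and follows essentially the paper's route: start from \Cref{prop:result}, turn the product into a sum with a Bernoulli-type inequality, and bound the geometric sum by $\sum_{i\geqslant s}q^{-i}=1/(q^{s-1}(q-1))$. The only difference is in how the product inequality is applied. The paper works in two stages, first on the inner product over $i$ and then Bernoulli's inequality for the $T$-th power, which tacitly uses that the base $1-\frac{1}{q^{s-1}(q-1)}$ is nonnegative (true since $q\geqslant 2$). You apply $\prod_j(1-x_j)\geqslant 1-\sum_j x_j$ once to all $T\ell$ factors, which avoids that check.
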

\begin{proof}
Let $T \coloneqq \lceil N / \ell \rceil$. By combining~\Cref{prop:result} with the Bernoulli's inequality, we have
\[
\begin{aligned}
    \prod_{i=s}^{\ell+s-1}\left(1-\frac{1}{q^i}\right)^T &\geqslant \left(1 - \frac{1}{q^s} - \ldots - \frac{1}{q^{\ell+s-1}}\right)^T\\
    &= \left(1 - \frac{1}{q^s}\cdot\frac{1-\frac{1}{q^\ell}}{1 - \frac{1}{q}}\right)^T\\
    &\geqslant \left(1 - \frac{1}{q^s}\cdot\frac{q}{q - 1}\right)^T\\
    &\geqslant 1 - \frac{1}{q^s}\cdot\frac{q}{q - 1}\cdot T\\
    &= 1 - \frac{T}{q^{s-1}(q-1)}.&&\qedhere
\end{aligned}
\]
\end{proof}

% \begin{remark}[\Cref{prop:result} vs.~\Cref{prop:result-3}]
% We observed that the bounds provided by~\Cref{prop:result} and~\Cref{prop:result-3} are typically close. For example, for the examples in~\Cref{tab:remark-1}, the bounds agree to machine double precision.
% \end{remark}

In practice, the user may want to specify an upper bound on the probability of incorrect result, a real number $\varepsilon$, with $0 < \varepsilon < 1$. Using \Cref{prop:result-3}, for fixed $N,\ell,q$, we can compute an integer $s_\varepsilon$, such that the probability of failure in~\Cref{alg:randomized-rowspace-first} applied with $s := s_\varepsilon$ does not exceed $\varepsilon$:

\begin{proposition}\label{prop:compute:s}
Let $A$ be as in~\eqref{eq:concat-2}. Let $1 \leqslant \ell \leqslant N$ and $0 < \varepsilon < 1$ be fixed. The probability that \Cref{alg:randomized-rowspace-first} applied to $A$, $\ell$, and $s_\varepsilon$ returns an incorrect result is at most $\varepsilon$, where
\begin{equation}
% \label{eq:compute-s}
s_\varepsilon \coloneqq \left\lceil \frac{\log_2 \left(\frac{\left\lceil N/\ell \right\rceil}{\varepsilon(q-1)}\right)}{\log_2q}+1\right\rceil.
\end{equation}
\end{proposition}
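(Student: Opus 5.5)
The plan is to reduce the statement directly to \Cref{prop:result-3}. That result says that, for any integer $s \geqslant 1$, \Cref{alg:randomized-rowspace-first} returns an incorrect result with probability at most $T/(q^{s-1}(q-1))$, where $T \coloneqq \lceil N/\ell\rceil$. It therefore suffices to check two things: that $s_\varepsilon$ is a valid parameter, namely an integer with $s_\varepsilon \geqslant 1$, and that
\[
\frac{T}{q^{s_\varepsilon-1}(q-1)} \leqslant \varepsilon .
\]

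For the inequality, I would rewrite the target as $q^{s_\varepsilon - 1} \geqslant T/(\varepsilon(q-1))$. Since $q \geqslant 2$, taking $\log_2$ preserves the inequality, and dividing by $\log_2 q > 0$ turns it into
\[
s_\varepsilon - 1 \geqslant \frac{\log_2\left(\frac{T}{\varepsilon(q-1)}\right)}{\log_2 q}.
\]
By definition, $s_\varepsilon$ is the ceiling of the right-hand side plus $1$. Since $\lceil x\rceil \geqslant x$, the inequality holds, and substituting back into \Cref{prop:result-3} gives the claim.

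The only delicate point is confirming that $s_\varepsilon \geqslant 1$, since \Cref{alg:randomized-rowspace-first} and \Cref{prop:result-3} assume $s \geqslant 1$. We have $T \geqslant 1$ and $\varepsilon < 1$, so $T/(\varepsilon(q-1)) > 1/(q-1)$. This gives
\[
\log_q\left(\frac{T}{\varepsilon(q-1)}\right) + 1 > \log_q\left(\frac{q}{q-1}\right) > 0,
\]
so its ceiling is at least $1$. Integrality of $s_\varepsilon$ holds trivially because it is defined as a ceiling.

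I do not expect a real obstacle here. The argument is a monotone inversion of the bound in \Cref{prop:result-3}, and the only care needed is keeping the direction of the inequalities under $\log_2$ and under division by $\log_2 q$.
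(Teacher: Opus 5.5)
Your proposal is correct and takes the same approach as the paper, whose proof simply says the claim follows from \Cref{prop:result-3}. You fill in the omitted details: inverting the bound $\lceil N/\ell\rceil/(q^{s-1}(q-1)) \leqslant \varepsilon$ via $\lceil x\rceil \geqslant x$, and checking that $s_\varepsilon \geqslant 1$ so that \Cref{alg:randomized-rowspace-first} and \Cref{prop:result-3} apply.
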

\begin{proof}
The claim follows readily from~\Cref{prop:result-3}.
\end{proof}

\section{Probabilistic row echelon form in the F4 algorithm}\label{sec:f4}

The goal of this section is to introduce a Las-Vegas variant of the F4 algorithm (\Cref{alg:lv-f4}) using probabilistic row echelon form. Let $k$ be a finite field with $q$ elements.

\subsection{Algebraic background}

We recall some standard definitions, and also refer to \cite[Chapter 2]{cox}. Let $k[x_1,\ldots,x_\tau]$ denote the ring of polynomials in the indeterminates $x_1,\ldots,x_\tau$ over $k$. Let $\mathbb{N}_0 \coloneqq \{0,1,2,\ldots\}$. A monomial is a product $x_1^{e_1} \cdots x^{e_\tau}_\tau$ with $e_1,\ldots,e_\tau \in \mathbb{N}_0$. 
A term is a constant multiple of a monomial. 
Let $\mathcal{M}$ denote the set of monomials in $k[x_1,\ldots,x_\tau]$. A monomial ordering $\prec$ is a total ordering on $\mathcal{M}$ such that $1 \prec u$ for $u\neq1$ and $u \prec v$ implies $u w \prec v w$ for every $u,v,w\in\mathcal{M}$. 

Let $f \in k[x_1,\ldots,x_\tau]$. 
If $f \neq 0$, then let $\operatorname{lm}(f)$ and $\operatorname{lt}(f)$ 
denote the leading monomial and the leading term 
of $f$ with respect to monomial ordering $\prec$, respectively.
Let $m \in \mathcal{M}$, and let $\operatorname{coeff}(f, m) \in k$ denote the coefficient in front of the monomial $m$ in $f$. Let $\operatorname{supp}(f) \subset \mathcal{M}$ denote the set of monomials that enter in $f$ with nonzero coefficients.

Let $G \subset k[x_1,\ldots,x_\tau]$ be a finite subset. If $f\neq0$, we say that $f$ is reducible with respect to $G$ if there exist $g \in G \setminus \{0\}$, $u \in \mathcal{M}$, and $c \in k\setminus \{0\}$, such that $\operatorname{lt}(c\cdot u\cdot g) = \operatorname{lt}(f)$. Then, $c \cdot u \cdot g$ is called a reducer of $f$, and $f - c \cdot u \cdot g$ is the result of a single reduction of $f$ with respect to $G$. We say that $f$ reduces to $f^*$ with respect to $G$ if there exists a chain of reductions so that $f^* = f - c_1 u_1 g_1 - \ldots - c_\nu u_\nu g_\nu$ with $g_1,\ldots,g_\nu \in G \setminus\{0\}$, $u_1,\ldots,u_\nu \in \mathcal{M}$, and $c_1,\ldots,c_\nu \in k\setminus \{0\}$. 

Let $f, g \in k[x_1,\ldots,x_\tau]\setminus \{0\}$. Let 
$\operatorname{mult}(f, g) \coloneqq \operatorname{lcm}(\operatorname{lm}(f), \operatorname{lm}(g)) / \operatorname{lt}(f)$. The S-polynomial of $f$ and $g$ is defined as
\[
\operatorname{Spoly}(f, g) \coloneqq  f\cdot \operatorname{mult}(f, g) - g \cdot\operatorname{mult}(g, f).
\]

Let $I = \langle G \rangle$ denote the ideal generated by a finite set of polynomials $G \subset k[x_1,\ldots,x_\tau]$. 
A set $G$ is a \grobner{} basis of $I$ with respect to the monomial ordering $\prec$ if for every $f \in I \setminus \{0\}$ there exists $g \in G$ such that $\operatorname{lm}(g)$ divides $\operatorname{lm}(f)$.
Every ideal in $k[x_1,\ldots,x_\tau]$ has a \grobner{} basis~\cite[Chapter 2, \S 5]{cox}.

\subsection{Macaulay matrices}

Fix a monomial ordering $\prec$.

\begin{definition}[Macaulay matrix]
\label{def:mac}
Let $P,G\subset k[x_1,\ldots,x_\tau]$. We construct $U \subset k[x_1,\ldots,x_\tau]$ in the following way. Starting with $U\coloneqq P$, for every monomial $m$ in $\operatorname{supp}(u)$ for some $u \in U$, we choose a single $g \in G \setminus \{0\}$ such that $m$ is divisible by $\lm(g)$ (if such exists) and add $\frac{m}{\lm(g)}g$ to $U$. We recursively repeat the procedure for the monomials that have not yet been processed in the supports of the added polynomials.

The Macaulay matrix $\Mac(P,G) \in k^{\mu\times\nu}$ is the coefficient matrix of $U = \{u_1,\ldots,u_\mu\}$ with respect to the monomials $m_1 \succ \ldots \succ m_\nu$ occurring in the supports of $U$, that is, $\operatorname{Mac}(P,G)_{i,j} = \operatorname{coeff}(u_i, m_j)$ for every $i = 1,\ldots,\mu$ and $j = 1,\ldots,\nu$.
\end{definition}

\begin{notation}
    We may implicitly convert between polynomials and the rows of $\operatorname{Mac}(P,G)$. In particular, we may treat $h \in \operatorname{rowspace}(\operatorname{Mac}(P,G))$ both as an element of $k^\nu$ and as an element of $k[x_1,\ldots,x_\tau]$.
\end{notation}

\begin{notation}
    Let $P, G \subset k[x_1,\ldots,x_\tau]$. Let $M^*$ be a row echelon form of $M\coloneqq\Mac(P,G)$, and let $F^*\subset k[x_1,\ldots,x_\tau]$ be polynomials corresponding to the rows of $M^*$. Denote
    \[
    \operatorname{Rem}_G(M^*) \coloneqq
\{h\in F^* \setminus \{0\} \mid
\nexists g\in G\setminus\{0\}:
\operatorname{lm}(g)\mid\operatorname{lm}(h)\}.
    \]
\end{notation}

\begin{lemma}[Polynomial reduction via Macaulay matrix]
\label{lemma:mac-ref}
If $h\in\operatorname{rowspace}(M^*)$ is nonzero and not reducible with respect to $G$, then there exists $h^*\in\operatorname{Rem}_G(M^*)$
such that $\operatorname{lm}(h)=\operatorname{lm}(h^*)$.
\end{lemma}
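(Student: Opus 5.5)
The plan is to use the standard fact that, in a row echelon form whose columns are ordered by decreasing monomials $m_1 \succ \ldots \succ m_\nu$, every nonzero vector of the rowspace has its leading entry in a pivot column. Concretely, the leading monomial of any nonzero $h \in \operatorname{rowspace}(M^*)$ equals the leading monomial of some nonzero row of $M^*$. Taking that row as $h^*$, it only remains to show that $h^* \in \operatorname{Rem}_G(M^*)$. Since $\operatorname{lm}(h^*) = \operatorname{lm}(h)$, the polynomial $h^*$ is nonzero, and $h$ not being reducible with respect to $G$ means that no $\operatorname{lm}(g)$ with $g \in G \setminus \{0\}$ divides $\operatorname{lm}(h)$, hence none divides $\operatorname{lm}(h^*)$.

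First, I will fix the correspondence between row positions and monomials. For a nonzero row vector $v \in k^\nu$, the index of its first nonzero entry $j$ corresponds to $\operatorname{lm}(v) = m_j$, because the columns are sorted decreasingly with respect to $\prec$. Next, I will write $h = \sum_i c_i f_i$, where $f_1, \ldots, f_\rho$ are the nonzero rows of $M^*$, listed with strictly increasing pivot indices $j_1 < \ldots < j_\rho$. This ordering of pivots is exactly what the row echelon property provides.

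Let $i_0$ be the smallest index with $c_{i_0} \neq 0$; such an index exists since $h \neq 0$. Every $f_i$ with $i > i_0$ has zero entries in columns $1, \ldots, j_{i_0}$, while $f_{i_0}$ has zero entries before column $j_{i_0}$ and a nonzero entry at $j_{i_0}$. Hence the first nonzero entry of $h$ sits at column $j_{i_0}$, with value $c_{i_0}(f_{i_0})_{j_{i_0}} \neq 0$. Therefore $\operatorname{lm}(h) = m_{j_{i_0}} = \operatorname{lm}(f_{i_0})$, and I set $h^* \coloneqq f_{i_0} \in F^* \setminus \{0\}$.

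Finally, suppose some $g \in G \setminus \{0\}$ had $\operatorname{lm}(g) \mid \operatorname{lm}(h^*) = \operatorname{lm}(h)$. Then, taking $u = \operatorname{lm}(h)/\operatorname{lm}(g)$ and $c = \operatorname{lc}(h)/\operatorname{lc}(g)$, we get $\operatorname{lt}(c\,u\,g) = \operatorname{lt}(h)$, so $h$ would be reducible with respect to $G$, a contradiction. Thus $h^* \in \operatorname{Rem}_G(M^*)$.

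The only step requiring any care is the pivot argument, specifically that the echelon pivots are strictly increasing and that this yields the leading-entry claim for arbitrary linear combinations. Everything else is unwinding the definitions, and I expect no genuine obstacle.
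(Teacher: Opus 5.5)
Your proof is correct and follows the same route as the paper. The row echelon structure forces the leading monomial of any nonzero element of the rowspace to equal that of some nonzero row $h^*$, and non-reducibility of $h$ then transfers to $h^*$ because the two share a leading monomial; you simply make explicit the pivot argument (via the smallest index with nonzero coefficient) and the equivalence between $\operatorname{lm}(g)\mid\operatorname{lm}(h)$ and reducibility, which the paper leaves implicit.
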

\begin{proof}
Assume $h \neq 0$. Note $h$ is a $k$-linear combination of the elements of $F^*$. Since $M^*$ is in row echelon form, the leading monomials of the
nonzero elements of $F^*$ are distinct. Therefore, the leading monomial of a linear combination is the leading monomial of one of the summands. Hence, the leading monomial of $h$ is the leading
monomial of some $h^*\in F^*$. Since $h$ is not reducible with respect to $G$, neither is $h^*$. Hence $h^*\in\operatorname{Rem}_G(M^*)$.
\end{proof}

We recall an algorithm for testing if a set of polynomials is a \grobner{} basis using Macaulay matrices.

\begin{proposition}[\grobner{} basis test]\label{prop:test-gb}
Let $G\subset k[x_1,\ldots,x_\tau]$. Let
$
P \coloneqq
\{\operatorname{Spoly}(f,g)\mid f,g\in G\setminus\{0\}\},
$
and let \(M^*\) be a row echelon form of $M\coloneqq\Mac(P,G)$. Then
\[
G\text{ is a \grobner{} basis of $\langle G \rangle$}
\quad\Longleftrightarrow\quad
\operatorname{Rem}_G(M^*)=\varnothing.
\]
\end{proposition}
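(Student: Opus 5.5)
The plan is to reduce both directions to Buchberger's criterion: $G$ is a \grobner{} basis of $\langle G\rangle$ if and only if every $\operatorname{Spoly}(f,g)$, $f,g\in G\setminus\{0\}$, reduces to zero with respect to $G$ (equivalently, has a standard representation with respect to $G$; see~\cite[Chapter 2, \S 6]{cox}). The Macaulay matrix acts as a linear-algebra proxy for polynomial reduction, with \Cref{lemma:mac-ref} linking the two. Throughout we use that every row of $M$ is either an S-polynomial of elements of $G$ or a monomial multiple of an element of $G$. Hence $\operatorname{rowspace}(M^*)=\operatorname{rowspace}(M)\subseteq\langle G\rangle$.

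\emph{($\Rightarrow$)} Suppose $G$ is a \grobner{} basis and that some $h\in\operatorname{Rem}_G(M^*)$ exists. Then $h\neq 0$ and $h\in\operatorname{rowspace}(M^*)\subseteq\langle G\rangle$. By the definition of a \grobner{} basis, $\operatorname{lm}(h)$ is divisible by $\operatorname{lm}(g)$ for some $g\in G\setminus\{0\}$. This contradicts $h\in\operatorname{Rem}_G(M^*)$, so $\operatorname{Rem}_G(M^*)=\varnothing$.

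\emph{($\Leftarrow$)} Assume $\operatorname{Rem}_G(M^*)=\varnothing$, and fix $p\in P$; this is a row of $M$. I will show that $p$ admits a standard representation $p=\sum_j c_j u_j g_j$ with $\operatorname{lm}(u_jg_j)\preceq\operatorname{lm}(p)$, using only rows of $M$. I argue by descending induction along the column order of the matrix. Let $h$ be the current remainder, starting with $h:=p$ and always lying in $\operatorname{rowspace}(M)$. If $h\neq0$, then \Cref{lemma:mac-ref} together with $\operatorname{Rem}_G(M^*)=\varnothing$ shows that $h$ is reducible with respect to $G$. Indeed, otherwise there would be some $h^*\in\operatorname{Rem}_G(M^*)$. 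So $\operatorname{lm}(h)$ is divisible by some $\operatorname{lm}(g)$.

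The key point is that $\operatorname{lm}(h)$ is a column monomial of $M$, because $h$ lies in the rowspace. By the closure property in \Cref{def:mac}, every column monomial divisible by some leading monomial of $G$ has a chosen reducer $\frac{m}{\operatorname{lm}(g')}g'$ among the rows of $M$. Subtracting a scalar multiple of this reducer keeps $h$ in $\operatorname{rowspace}(M)$ and strictly lowers $\operatorname{lm}(h)$. Since there are finitely many columns, the process terminates at $0$. This exhibits a reduction of $p$ to $0$ with respect to $G$, in the sense of the paper's definition, and every reducer used has leading monomial $\preceq\operatorname{lm}(p)$.

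The main obstacle is checking that the recursive construction in \Cref{def:mac} really provides a reducer for \emph{every} column monomial divisible by a leading monomial of $G$, including monomials that first appear in supports of added reducer rows. This is exactly what ``recursively repeat the procedure'' guarantees, and finiteness follows from the reducers having smaller non-leading monomials. With this in hand, Buchberger's criterion yields that $G$ is a \grobner{} basis. Note that only the ``if'' direction needs the Macaulay closure; the ``only if'' direction uses nothing beyond $\operatorname{rowspace}(M)\subseteq\langle G\rangle$.
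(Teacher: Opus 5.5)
Your proof is correct and is essentially the paper's argument. The ($\Rightarrow$) direction is identical, and for ($\Leftarrow$) you use the same two ingredients: the reducer rows supplied by the closure in \Cref{def:mac} to simulate top-reduction inside $\operatorname{rowspace}(M)$, and \Cref{lemma:mac-ref} to rule out a nonzero irreducible element there, only arranged directly rather than by contradiction; this is in fact slightly cleaner, since you build the Macaulay-based reduction chain yourself and show it reaches $0$, whereas the paper tacitly identifies the nonzero irreducible remainder given by Buchberger's criterion with the one produced by the selected reducers.
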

\begin{proof}
Assume $\operatorname{Rem}_G(M^*)=\varnothing$. Suppose $G$ is not a \grobner{} basis. By Buchberger's criterion~\cite[Ch.~2, \S 6]{cox}, there exist
$f,g\in G\setminus\{0\}$ such that
$s\coloneqq\operatorname{Spoly}(f,g)$ reduces to $s^*$ with respect to $G$, and $s^*$ is nonzero and not reducible with respect to $G$. By the construction of $\Mac(P,G)$, we may reduce $s$ using, for each reducible leading monomial, the reducer selected in~\Cref{def:mac}. Hence the result $s^*$ of this sequence of reductions belongs to $\operatorname{rowspace}(M) =
\operatorname{rowspace}(M^*)$.
By~\Cref{lemma:mac-ref}, there exists
$h^*\in\operatorname{Rem}_G(M^*)$, which contradicts
$\operatorname{Rem}_G(M^*)=\varnothing$.

Conversely, assume $G$ is a \grobner{} basis. Every row of $M$ belongs to $\langle G\rangle$, and therefore so does every row of $M^*$. Hence, for every nonzero row $h^*$ of $M^*$, there exists
$g\in G\setminus\{0\}$ such that
$\lm(g)\mid\lm(h^*),$ whence $\operatorname{Rem}_G(M^*)=\varnothing$.
\end{proof}

\subsection{Monte-Carlo variant of the F4 algorithm}

We recall a simplified Monte-Carlo variant of the F4 algorithm resembling the one from~\cite{msolve,splitting}.

\begin{algorithm}[H]
\caption{Monte-Carlo variant of the F4 algorithm (simplified version, cf.~{\cite[Algorithm F4]{F4}})}
\label{alg:mc-f4}
\vspace{-0.4em}
\begin{description}[itemsep=0pt]
\item[Input:] $F \subset k[x_1,\ldots,x_\tau]$ and a real number $0 < \varepsilon < 1$.
\item[Output:] $G \subset k[x_1,\ldots,x_\tau]$, a \grobner{} basis of $\langle F \rangle$.
\end{description}

\begin{enumerate}[label = \textbf{(Step~\arabic*)}, leftmargin=*, align=left, labelsep=4pt, itemsep=2pt, topsep=2pt]
    \item Let $\varepsilon_1,\varepsilon_2,\ldots$ be real numbers such that $0 < \varepsilon_i < 1$ and $\varepsilon_1 + \varepsilon_2 + \ldots \leqslant \varepsilon$.
    \item Let $G \coloneqq F$.
    \item \label{step:alg1-3} For every $i = 1, 2, 3\ldots$ {\bf do}
    \begin{enumerate}[label = (\alph*), ref = \theenumi (\alph*), leftmargin=*, align=left, labelsep=2pt, itemsep=2pt]
    \item Let $P \coloneqq \{\operatorname{Spoly}(f,g) \mid f, g \in G \setminus \{0\}\}$. 
    \item Let $M := \mathrm{Mac}(P, G) \in k^{\mu \times \nu}$. 
    \item \label{step:compute-s} Let $1 \leqslant \ell \leqslant \mu$. Let $s_{\varepsilon_i}$ be as in~\Cref{prop:compute:s} applied with $N := \mu$.
    \item \label{step:alg1-3-3} Compute $M^*$, a row echelon form of $M$, by applying~\Cref{alg:randomized-rowspace-first} to $M$, $\ell$ and $s_{\varepsilon_i}$ (\Cref{remark:r-is-zero}).
    \item If $\operatorname{Rem}_G(M^*) = \varnothing$, then {\bf return} $G$; otherwise, set $G \coloneqq G \cup \operatorname{Rem}_G(M^*)$ and continue.
    \end{enumerate}
\end{enumerate}
\end{algorithm}

We first note that even when the output of this algorithm is not a correct \grobner{} basis, it still has the following useful property:

\begin{lemma}\label{lemma:inclusion}
Let $F \subset k[x_1,\ldots,x_\tau]$ and $0 < \varepsilon < 1$. Let $G$ be the output of the Monte-Carlo variant of the F4 algorithm (\Cref{alg:mc-f4}) applied to $F$ and $\varepsilon$. Then, $\langle G \rangle = \langle F \rangle$.
\end{lemma}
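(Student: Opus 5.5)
We will prove the invariant $\langle G \rangle = \langle F \rangle$ for the value of $G$ at the start of every iteration of the loop in \ref{step:alg1-3}, by induction on the iteration count $i$. The returned set is the value of $G$ at some iteration, so the statement follows from the invariant. Initially $G = F$, so the invariant holds trivially.

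For the inductive step, we assume $\langle G \rangle = \langle F \rangle$ at the start of iteration $i$. If the algorithm does not return, then $G$ is replaced by $G' \coloneqq G \cup \operatorname{Rem}_G(M^*)$. Clearly $G \subseteq G'$, so $\langle F \rangle = \langle G \rangle \subseteq \langle G' \rangle$. For the reverse inclusion, it suffices to show $\operatorname{Rem}_G(M^*) \subset \langle G \rangle$.

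Every element of $\operatorname{Rem}_G(M^*)$ is a polynomial corresponding to a row of $M^*$. By the second part of \Cref{prop:result}, which holds regardless of whether the randomized computation succeeds, we have $\operatorname{rowspace}(M^*) \subseteq \operatorname{rowspace}(M)$. Here \Cref{prop:result} is applied with $R=0$ via \Cref{remark:r-is-zero}. Each row of $M = \Mac(P,G)$ is one of two kinds:
\begin{itemize}
\item an S-polynomial $\operatorname{Spoly}(f,g) = \operatorname{mult}(f,g)\, f - \operatorname{mult}(g,f)\, g$ with $f,g \in G$, which lies in $\langle G \rangle$;
\item a multiple $\frac{m}{\lm(g)} g$ with $g \in G$, which also lies in $\langle G \rangle$.
\end{itemize}
Since $\langle G \rangle$ is closed under $k$-linear combinations, $\operatorname{rowspace}(M)$, identified with polynomials, is contained in $\langle G \rangle$. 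Hence $\operatorname{Rem}_G(M^*) \subset \langle G\rangle$, so $\langle G' \rangle = \langle G \rangle = \langle F \rangle$.

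There is no real obstacle. The only subtle point is that correctness of the echelon form is never needed. What matters is the probability-free inclusion $\operatorname{rowspace}(L)\subseteq\operatorname{rowspace}(A)$ from \Cref{prop:result}, which holds because each appended $c^*$ is a linear combination of rows of $A$ and rows already in $L$. We will also remark that the statement concerns only the output when the algorithm terminates, so termination need not be addressed.
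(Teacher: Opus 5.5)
Your proof is correct and follows essentially the same argument as the paper. The rows of $\Mac(P,G)$ are S-polynomials or monomial multiples of elements of $G$, so they lie in $\langle G\rangle$. The probability-free inclusion $\operatorname{rowspace}(M^*)\subseteq\operatorname{rowspace}(M)$ from \Cref{prop:result} then places $\operatorname{Rem}_G(M^*)$ inside $\langle G\rangle$, so the invariant $\langle G\rangle=\langle F\rangle$ is preserved at every iteration. Your explicit induction just spells out what the paper compresses into ``holds at every iteration.''
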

\begin{proof}
At every iteration in \ref{step:alg1-3}, the rows of $M=\Mac(P,G)$ are either monomial multiples or S-polynomials of the elements of $G$, and thus belong to $\langle G \rangle$.
By~\Cref{prop:result}, every row of $M^*$ also belongs to $\langle G\rangle$. Hence, $\operatorname{Rem}_G(M^*)\subseteq\langle G\rangle$, so $\langle G \rangle = \langle F \rangle$ holds at every iteration.
\end{proof}

\begin{example}\label{example:incosistent}
Consider the case when \Cref{alg:mc-f4} returns $G$ with $1 \in G$. Then \Cref{lemma:inclusion} implies $1 \in \langle F \rangle$.
\end{example}

\begin{proposition}[{cf. \cite[Theorem 2.2]{F4}}]
\label{theorem:mc-f4}
    \Cref{alg:mc-f4} terminates. \Cref{alg:mc-f4} returns a correct result with probability at least $1-\varepsilon$.
\end{proposition}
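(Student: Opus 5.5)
Termination and correctness are handled separately; termination is deterministic and does not rely on any randomness.

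\emph{Termination.} We follow the standard Noetherian argument of \cite[Theorem 2.2]{F4}. Consider an iteration that does not return. Then $\operatorname{Rem}_G(M^*)\neq\varnothing$, and each $h\in\operatorname{Rem}_G(M^*)$ satisfies $\lm(h)\notin\langle \lm(g) : g\in G\setminus\{0\}\rangle$, by the definition of $\operatorname{Rem}_G$. Adding these elements to $G$ therefore strictly enlarges the monomial ideal $\langle \lm(g): g\in G\setminus\{0\}\rangle$. Dickson's lemma (equivalently, the Noetherianity of $k[x_1,\ldots,x_\tau]$) forbids an infinite strictly ascending chain of such ideals. Hence some iteration has $\operatorname{Rem}_G(M^*)=\varnothing$, and the algorithm returns. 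This holds whatever random choices are made, because it only uses that the rows of $M^*$ are polynomials, together with the definition of $\operatorname{Rem}_G$.

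\emph{Correctness.} Let $D_i$ be the event that the algorithm reaches iteration $i$ and \Cref{alg:randomized-rowspace-first} returns an incorrect row echelon form there. We first argue that if no $D_i$ occurs, the output is correct. Suppose the algorithm returns $G$ at iteration $i$. Since $D_i$ does not occur, $M^*$ is a true row echelon form of $\Mac(P,G)$. Then \Cref{prop:test-gb} shows that $\operatorname{Rem}_G(M^*)=\varnothing$ implies $G$ is a \grobner{} basis of $\langle G\rangle$. Moreover $\langle G\rangle=\langle F\rangle$ by \Cref{lemma:inclusion}, which holds even if earlier iterations were wrong. So the output is a \grobner{} basis of $\langle F\rangle$. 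Consequently $P(\text{incorrect})\leqslant P(\bigcup_i D_i)\leqslant\sum_i P(D_i)$.

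It remains to bound each $P(D_i)$. At iteration $i$ the matrix $M$ is determined by the previous random choices. Conditioning on those choices, the fresh random vectors $b_j$ used in iteration $i$ are independent of $M$. By \Cref{prop:compute:s} with $N:=\mu$, and with $\ell\leqslant\mu$ as required, the conditional probability of an incorrect echelon form is at most $\varepsilon_i$. Taking expectations gives $P(D_i)\leqslant\varepsilon_i$, and therefore $P(\text{incorrect})\leqslant\sum_i\varepsilon_i\leqslant\varepsilon$.

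The main subtlety lies in this last step. The matrices are adaptively chosen random objects, and iterations may be reached only after earlier errors. The conditioning argument handles both issues: the bound of \Cref{prop:compute:s} holds for \emph{every} fixed input matrix, so it survives conditioning on the history. Events for iterations that are never reached contribute zero.
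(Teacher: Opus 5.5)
Your proposal is correct and follows essentially the same route as the paper: termination via the strictly ascending chain of leading-monomial ideals and Noetherianity, correctness via \Cref{prop:test-gb} together with \Cref{lemma:inclusion}, and a union bound with per-iteration failure probability at most $\varepsilon_i$. Two refinements do not change the argument. You condition on the history to handle the adaptively chosen matrices, which the paper asserts without justification. You also observe that only the final iteration's echelon form needs to be correct, whereas the paper assumes every iteration is correct.
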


\begin{proof}
We first prove termination. Let $i \in \mathbb{N}$. Let $G_i, M_i$, and $M_i^*$ denote the values of $G, M, M^*$, respectively, at the $i$-th iteration of \ref{step:alg1-3}. Let also
\[
J_i\coloneqq
\langle \operatorname{lm}(g)\mid g\in G_i\setminus\{0\}\rangle.
\]
If the algorithm does not terminate at iteration $i$, then
$\operatorname{Rem}_{G_i}(M_i^*)\neq\varnothing$. Hence, for every
$h\in\operatorname{Rem}_{G_i}(M_i^*)$,
$\operatorname{lm} (h)\notin J_i,$
and therefore
$J_i\subsetneq J_{i+1}.$
Assume the algorithm does not terminate. Then, $J_1 \subsetneq J_2 \subsetneq \ldots$ is an ascending chain of ideals, which contradicts $k[x_1,\ldots,x_\tau]$ being Noetherian.

We now prove correctness. Suppose that  every application of~\Cref{alg:randomized-rowspace-first} in \ref{step:alg1-3-3} returns
a correct row echelon form.
Then the algorithm terminates, and at the final iteration
$\operatorname{Rem}_G(M^*)=\varnothing$. Then $G$ is a \grobner{} basis of $\langle G\rangle$ by \Cref{prop:test-gb}. By \Cref{lemma:inclusion}, $\langle G\rangle=\langle F\rangle$, hence $G$ is a \grobner{} basis of $\langle F\rangle$.

Consider now the probability. At the iteration $i$, the probability that \Cref{alg:randomized-rowspace-first} in \ref{step:alg1-3-3} returns an incorrect result is at most $\varepsilon_i$, so the overall probability of an incorrect result is at most $\varepsilon_1 + \varepsilon_2 + \ldots \leqslant \varepsilon$. 
\end{proof}

\begin{remark}[Block structure of Macaulay matrices]
    Following Faug{\`e}re and Lachartre~\cite[Section 3]{fgla}, under a suitable permutation of rows and columns, the Macaulay matrix $\operatorname{Mac}(P,G)$ decomposes into blocks, $A_1 \in k^{R \times(R+M)}$ and $A_2 \in k^{N \times (R+M)}$, where $A_1$ is already in row echelon form. These blocks match the structure of~\eqref{eq:concat-2}. In particular, $N \leqslant |P|$, where $|P|$ denotes the number of elements in $P$.
    
    In the context of~\Cref{alg:mc-f4}, we may use this decomposition before computing row echelon form via \Cref{alg:randomized-rowspace-first} in \ref{step:alg1-3-3}. It is then sufficient to compute $s_{\varepsilon_i}$ in \ref{step:compute-s} by applying \Cref{prop:compute:s} with $N:=|P|$, instead of $N:=\mu$. Thus, the probability bound depends on the number of lower rows (S-polynomials) rather than on the total number of rows of the Macaulay matrix.
\end{remark}

\begin{example}[$|P|$ vs. $\mu$]
\label{remark:p-vs-u}
    In the F4 algorithm, $|P|$ is the number of S-polynomials, and $\mu$ is the total number of rows in the Macaulay matrix. Note $|P|$ could be much smaller than $\mu$.
    In the univariate case, with $G = \{x^d + 1, x + 1\}$, at the first iteration of F4, there is a single S-polynomial, so $|P| = 1$, but $\mu = d$. 
    In the multivariate case, consider $G = \{ (x_1 + \ldots + x_\tau)^d + 1, x_1 \cdots x_\tau + 1\}$ with $d \leqslant \tau$. In degrevlex monomial ordering, for $k$ of characteristic zero or sufficiently large, we have $|P| = 1$ but $\mu = \binom{\tau+d-2}{d-1}$.
\end{example}

\subsection{Las-Vegas variant of the F4 algorithm}\label{sec:f4-lv}

Based on the observation made in \Cref{lemma:inclusion}, we devise the following algorithm.

\begin{algorithm}[H]
\caption{Las-Vegas variant of the F4 algorithm}
\label{alg:lv-f4}
\vspace{-0.4em}
\begin{description}[itemsep=0pt]
\item[Input:] $F \subset k[x_1,\ldots,x_\tau]$.
\item[Output:] $G \subset k[x_1,\ldots,x_\tau]$, a \grobner{} basis of $\langle F \rangle$.
\end{description}

\begin{enumerate}[label = \textbf{(Step~\arabic*)}, leftmargin=*, align=left, labelsep=4pt, itemsep=2pt, topsep=2pt]
    \item Let $0<\varepsilon<1$ be an arbitrary real number.
    \item \label{alg:step:mc} Let $G$ be the output of the Monte-Carlo variant of the F4 algorithm (\Cref{alg:mc-f4}) applied to $F$ and $\varepsilon$.
    \item \label{alg:step:test} Check if $G$ is a \grobner{} basis of $\langle G \rangle$, for example, by testing that all S-polynomials reduce to zero via~\Cref{prop:test-gb}. If it is, then {\bf return} $G$. Otherwise, {\bf go to} \ref{alg:step:mc}.
\end{enumerate}
\end{algorithm}

\begin{proposition}
\label{theorem:lv-f4}
    \Cref{alg:lv-f4} terminates with probability one. The output of \Cref{alg:lv-f4} is correct.
\end{proposition}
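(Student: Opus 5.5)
The proof has two parts: correctness of the output, and termination with probability one.

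\emph{Correctness.} The algorithm returns $G$ only after \ref{alg:step:test} has certified that $G$ is a \grobner{} basis of $\langle G\rangle$. We assume this test uses a deterministic (correct) row echelon form, so that \Cref{prop:test-gb} applies. Since $G$ was produced by \Cref{alg:mc-f4} applied to $F$, \Cref{lemma:inclusion} gives $\langle G\rangle=\langle F\rangle$. Hence any returned $G$ is a \grobner{} basis of $\langle F\rangle$. The argument holds whether or not the probabilistic linear algebra inside \ref{alg:step:mc} succeeded.

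\emph{Termination with probability one.} Each execution of \ref{alg:step:mc} terminates by \Cref{theorem:mc-f4}. It returns a correct \grobner{} basis of $\langle F\rangle$ with probability at least $1-\varepsilon$. In that case the test in \ref{alg:step:test} succeeds by the converse direction of \Cref{prop:test-gb}, and the test itself is a finite computation. Successive runs of \ref{alg:step:mc} use fresh, independent random choices in \ref{step:random-vec}. The probability that the first $j$ rounds all fail is therefore at most $\varepsilon^j$. This bound tends to $0$ as $j\to\infty$ because $\varepsilon<1$. The event of non-termination lies in every such failure event, so its probability is $0$. As a byproduct, the number of rounds is dominated by a geometric variable, so the expected number of rounds is at most $1/(1-\varepsilon)$.

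\emph{Main obstacle.} The delicate point is justifying the bound $\varepsilon^j$. The inputs to \Cref{alg:randomized-rowspace-first} inside a run are matrices that depend on earlier random choices. However, the bound of \Cref{theorem:mc-f4} is stated for arbitrary fixed input, and it holds conditionally on the past: at each inner iteration $i$, the conditional failure probability is at most $\varepsilon_i$, whatever the current $G$ is. Each outer round starts again from the same $F$ with independent randomness. The per-round success bound $1-\varepsilon$ therefore holds conditionally on the failure of all previous rounds, and multiplying these conditional bounds gives $\varepsilon^j$.
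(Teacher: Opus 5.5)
Your proof is correct and follows the same route as the paper. Correctness comes from \Cref{lemma:inclusion} together with \Cref{prop:test-gb}, and termination holds because each run of \ref{alg:step:mc} fails with probability at most $\varepsilon$ by \Cref{theorem:mc-f4}, so $j$ consecutive failures have probability at most $\varepsilon^j\to 0$. Your remarks on fresh independent randomness across rounds, conditioning within a round, and the deterministic test make explicit what the paper's one-line argument $\varepsilon\cdot\varepsilon\cdots=0$ leaves implicit.
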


\begin{proof}
The algorithm does not terminate when \ref{alg:step:mc} produces an incorrect result indefinitely. The probability of that is at most $\varepsilon \cdot \varepsilon \cdot \varepsilon \cdots = 0$ by \Cref{theorem:mc-f4}.

Correctness follows from \Cref{lemma:inclusion} and \Cref{prop:test-gb}.
\end{proof}

\subsection{Experimental results}

In this section, we compare the efficiency of the classical deterministic F4 algorithm with the Las-Vegas variant (\Cref{alg:lv-f4}). To this end, we have implemented the Las-Vegas variant in Julia in Groebner.jl~\cite{demin2024groebnerjl}. For measuring running times, we use an Intel Xeon Gold 6538 server, and run all computations on a single thread\footnote{Benchmark script is available at \url{https://github.com/sumiya11/Groebner.jl/blob/v0.10.9/scripts/benchmark_linalg/las_vegas.jl}}.

\begin{table}[H]
\centering
\begin{tabular}{l||r||r|r|r||r}
\multirow{2}{*}{Example}
& \multirow{2}{*}{Deterministic F4}
& \multicolumn{3}{c||}{Las-Vegas F4}
& \multirow{2}{*}{$\dfrac{\text{Las-Vegas F4}}{\text{Deterministic F4}}$}
\\
& & Monte-Carlo F4 & \grobner{} basis test & Total & \\[0.3em]
\hline\hline
{\tt Cyclic-9}               & 250 s & 47 s & 5 s  & 52 s & 0.21\\
{\tt Cholera}~\cite{weights} & 47 s  & 16 s & 0 s  & 16 s & 0.34\\
{\tt Goodwin}~\cite{weights} & 64 s  & 58 s & 0 s  & 58 s & 0.91\\
{\tt Katsura-12}             & 22 s  & 2 s  & 21 s & 23 s & 1.05\\
{\tt Noon-9}                 & 6 s   & 5 s  & 11 s & 16 s & 2.53\\
\end{tabular}
\caption{The running time of the deterministic and the Las-Vegas variants of the F4 algorithm. The Las-Vegas one comprises two sub-algorithms: the Monte-Carlo F4 algorithm and the \grobner{} basis test.}
\label{table:benchmarks1}
\end{table}

We will be computing \grobner{} bases over integers modulo the prime $2^{30}+3$ in degrevlex monomial ordering. We will use the following implementations, which share the main data structures:
\begin{itemize}
    \item Classical deterministic F4 algorithm~\cite{F4}, distributed within Groebner.jl and described in~\cite{demin2024groebnerjl}.
    \item The Las-Vegas variant that follows the spirit of~\Cref{alg:lv-f4}, with practical modifications. Recall that the algorithm consists of the Monte-Carlo variant of the F4 algorithm (\Cref{alg:mc-f4}) and the \grobner{} basis test (\Cref{prop:test-gb}). The Monte-Carlo sub-algorithm implementation differs from deterministic F4 only in the algorithm used for computing row echelon forms. We use the probabilistic row echelon form algorithm (\Cref{alg:randomized-rowspace-first}) with the options $\ell := \lceil\sqrt{N/3}\rceil$ and $s := 1$. 
\end{itemize}

\Cref{table:benchmarks1} includes the running times of these implementations for several problems\footnote{The definitions of the problems are available at \url{https://github.com/sumiya11/Groebner.jl/blob/v0.10.9/src/utils/examples.jl}}. 
From the last column, we observe that the relative performance varies between examples. In {\tt Goodwin}, the deterministic F4 and the Monte-Carlo variant have similar running times, so we do not expect the Las-Vegas algorithm to be beneficial.
Naturally, the cost of the \grobner{} basis test is influenced by the size of the output. For the {\tt Cholera} and {\tt Goodwin} examples, all S-polynomials in the \grobner{} basis test are discarded without performing linear algebra, which makes the cost of this part negligible. At the other extreme, for {\tt Noon-9}, the cost of the \grobner{} basis test exceeds that of the F4 computation: the largest Macaulay matrix in the F4 algorithm involves 6000 S-polynomials, whereas
the \grobner{} basis test considers 29191 S-polynomials.

These experiments suggest that the Las-Vegas variant of the F4 algorithm could be appealing when both of the following hold: probabilistic row echelon form computation provides a speedup over classical F4, and it is inexpensive to test the
resulting \grobner{} basis.

\bibliographystyle{abbrvnat}
\bibliography{bib}

\end{document}